\documentclass[12pt,a4paper]{article}
\usepackage[latin1]{inputenc}
\usepackage[T1]{fontenc}
\usepackage{amsmath}
\usepackage{amsfonts}
\usepackage{graphicx}
\usepackage{enumerate}
\usepackage{amssymb,amsthm, mathtools}
\usepackage{tikz}
\usepackage{authblk}
\usepackage{natbib}
\usepackage{hyperref}
\bibliographystyle{econ-a}
\newtheorem{definition}{Definition}
\newtheorem{theorem}{Theorem}
\newtheorem{proposition}{Proposition}
\newtheorem{remark}{Remark}
\newtheorem{lemma}{Lemma}
\newtheorem{example}{Example}

\author{R. Soeiro}
\author{A. A. Pinto}
\title{Group network effects in price competition}
\affil{Universidade do Porto \& Inesc Tec}
\begin{document}
	\maketitle
	\begin{abstract}
		The partition of society into groups, polarization, and social networks are part of most conversations today. How do they influence price competition? We discuss Bertrand duopoly equilibria with demand subject to network effects. Contrary to models where network effects depend on one aggregate variable (demand for each choice), partitioning the dependence into groups creates a wealth of pure price equilibria with profit for both price setters, even if positive network effects are the dominant element of the game. If there is some asymmetry in how groups interact, two groups are sufficient. If network effects are based on undirected and unweighted graphs, at least five groups are required but, without other differentiation, outcomes are symmetric.
	\end{abstract}
	\section{Introduction}
		Consider the class of two-stage games where, in the first stage, a subset of players (e.g., firms, sellers, platforms, etc.) set a price for the service or good they provide; in the second stage, a different set of players (e.g., consumers, buyers, users, etc.) decide on a (priced) option from the first stage while subject to network effects. The equilibrium concept is that of pure price subgame-perfect equilibrium (SPE). Examples of services range from restaurants to online platforms; of goods, from clothing to technology. Network effects, i.e., how second-stage players influence each other's choices, can be positive, negative, or both. 

		In general, first stage price competition is fiercer under positive network effects, which tend to produce corner solutions (at least one firm with no profit), while negative network effects relax competition and allow for interior price solutions (where all firms have profit), \citep*[see][]{Grilo:2001:PCW}. One of the goals of this work is to show that while this idea holds when network effects (for each choice) depend on a single aggregate variable (number of players on that choice), it is not necessarily true when network effects depend on more than one aggregate variable. That is, when there are two or more second-stage groups that influence each other differently, the opposite may happen. The partition into groups might be as refined as small cliques, less detailed as in peer or social groups, or even broader as in democrats/ republicans, smokers/ non-smokers, etc. Interestingly, two aggregate variables (groups) are enough to allow reverting the one dimension result.\footnote{With uniform pricing, players in the first stage pricing game always look at only one aggregate variable: demand.}

		A second goal of the work is a general mathematical treatment of network effects as a determinant of equilibria, in particular, local pure price subgame-perfect equilibria where both firms have profit (SPE+). The general problem with finding these equilibria under positive network effects is that the possibility of a bandwagon response in the second stage creates price deviation incentives in the first stage. So, typically, positive network effects are assumed sufficiently small relative to, for example, a negative network effect (e.g., congestion), a product differentiation effect (e.g., Hotelling), or some lock-in variable (either in the first or second stage). For instance, when there are no network effects, \citet*{Caplin:1991:AAI} show that restrictions on consumer preferences are sufficient to ensure a pure price equilibrium. Take that solution and perturb it with (positive) network effects. As long as the latter are sufficiently small, one can study their impact without disrupting the existing solution.

		Our approach is, in some sense, the inverse. We consider a duopoly and start with a general network effects function, which can depend on several aggregate variables (e.g., groups of consumers). Sufficient conditions for a SPE+ outcome lead to (marginal) network properties on network effects. One can then perturb this solution by introducing different variables (e.g., (intrinsic) product differentiation) but without the need to assume a particular form (say $\rho$-concavity and convex support). In particular, this can help assess the appropriate function and dimension to use when modeling: how many groups are necessary and how do they influence each other? 

		Let's look at a simple stylized example with both positive and negative network effects. Say there are two choices (e.g., restaurants) and two groups, snobs and conformists. The former's utility depends negatively on the number of second-stage players on the same choice, and the latter positively. \citet*{Amaldoss:2005:POC}, for the linear case, use two commonly coupled restrictions to guarantee a SPE+. An assumption on the distributions of preferences of both groups and a bound on conformism. Suppose we change the game by removing both restrictions but allowing conformists to distinguish between snobs and other conformists (network effects, for conformists, now depend on two aggregate variables). If the impact of snobs on conformists is different than that of other conformists\footnote{One typical example is to take snobs as being celebrities having a higher impact on followers than do other followers.}, a SPE+ where both groups split (i.e., its members make different choices) becomes possible. Furthermore, the resulting outcome does not depend on how conformists distinguish others, the relative value of snobbism and conformism, or the sizes of each group.\footnote{The last observation refers to the linear case in question and is not necessarily completely generalizable.}

		Increasing the dimension can open the possibility for SPE+ outcomes, but it can also have the opposite effect, disrupting existing equilibria. For example, when there are only negative network effects. However, the relevant part for constructing SPE+ is the interaction among groups that split (the set of indifferent consumers), so the possibilities and complexity rise with the power set of the partition into groups (which is only a problem if one wishes to show non-existence or uniqueness). That means the type of equilibria where some groups split, while others don't, is always possible and works as going back to a lower dimension (although it produces different outcomes). Thus, increasing the detail in partitioning second-stage players into groups, in general, creates new equilibria.

		Dimension is only one part of the problem. Let's take the example of two-sided markets from \citet*{Armstrong:2006:CIT} (the specific case with uniform pricing), with again two groups but now only positive network effects. Two groups interact via two platforms, like heterosexual dating agencies, nightclubs, credit cards (consumers/ retailers), shopping malls, television channels, and, we will add, online platforms (users/ providers, e.g., lodging, transport, food delivery).\footnote{This serves as an example of how the relative impact of groups on each other produces equilibria, even if some of the mentioned cases would be more natural for the price discrimination literature.}  Each group's decision depends on the number of members of the other group that uses each platform. Armstrong characterizes a SPE+ for the case when network effects are sufficiently small relative to product differentiation (transport costs) using a Hotelling specification. Without differentiation (no transport costs), such SPE+ is not possible. Suppose now that we change the game and introduce intra-group influences. For example, consumers may value how many other consumers a platform attracts (e.g., as a form of quality or reliability signaling).\footnote{In Armstrong's example of dating agencies and nightclubs, the network effects structure modification means to consider them bisexual. Then, a SPE+ is possible if same-sex encounters have a higher value than different-sex ones.} We show that, in this case, if intra-group influences are sufficiently strong, a SPE+ where both groups split among the two platforms becomes possible (meaning it does not require product differentiation), which might seem counterintuitive.\footnote{What might strike as counterintuitive is that introducing intra-group influences produces an equilibrium where groups split. However, second-stage Nash is not the issue. Intra-group influences stabilize first-stage price competition.}

		As the number of considered groups increases, so does the complexity of possible equilibria as it depends on the relative interactions among the different groups. Some authors (e.g., \citet*{Banerji:2009:LNE}) consider a given a priori undirected and unweighted graph, which characterizes how groups are connected. It is an instance of a multilinear network function, with the advantage of reducing the discussion to an adjacency matrix. Because interactions weights are homogeneous and symmetric (either 1 or 0 for connection), properties and the existence of SPE+ outcomes are dependent only on the underlying structure, not the relative weights of network effects. In particular, on the induced subgraph among groups that split. However, homogeneity in interactions makes it harder to allow SPE+.  Even if groups have different sizes, these do not play the same role as a network effect weight (we will see that in the more general multilinear case), forcing an increase in dimension (more groups) to produce a solution. When a SPE+ outcome exists, it is unique, completely symmetric, and independent of the network structure and group sizes. The latter holds for every multilinear model where network effects are symmetric. Thus, weights and asymmetry in connections (directed networks) play a higher role than the underlying structure.

		The technical difficulty in studying SPE under network effects is the multiplicity of second-stage equilibria. The definition involves a second-stage contingent plan, so this is sometimes called a second-stage coordination problem. In a certain sense, we will argue, it is an artificial problem (in this class of games). Our objective here is to characterize local equilibrium outcomes, not predict play, hence multiplicity of game equilibria is not an issue. It is a first-stage maximization difficulty: evaluating an action (pricing) depends on whether deviations from it are profitable, but how to evaluate price deviations when there are multiple second-stage equilibria? In the classic definition, second-stage players (e.g., consumers) would have to announce their choices for every possible pair of prices. However, can you think of a practical example of such a situation? To deal with the issue, we use a slightly different approach, resorting to the concept of the imagined demand curve proposed by \citet*{Kaldor:1934:MRS}:

		\begin{quote}
			The traditional "market demand curve" for a certain product is not the same sort of thing as the demand curve which is relevant in determining the actions of the individual producer. The first denotes a functional relationship between the price and the amounts bought from a particular producer. The second concerns the image of this functional relationship as it exists in the mind of the entrepreneur. 
	
			(...)
	
			The "imagined demand curve" (...), becomes determinate as soon as it exists in the producer's imagination - and since something always must exist there the question of indeterminateness simply does not arise in this case. If on the other hand, by assuming perfect knowledge we make the two coincide, not only do we make the analysis unnecessarily unrealistic, but we introduce complications (by rendering the "imagined demand curve" indeterminate)
	
			(...)
	
			The "imagined" and the "real" demand curve are thus merely required to meet at one point - I admit, at the critical point.
		\end{quote}

		Here, the critical point is an outcome of the game: a pair of prices and a second-stage Nash equilibrium for those prices. If at least one group splits and no firm is at marginal cost, the implicit function theorem provides a coordination device for the first stage. A unique continuous imagined demand curve along which both price-setters can verify if deviations are profitable. If the standard first and second-order conditions are satisfied, this leads to a local SPE+. 

		What if, for example, we have at least one firm that always believes the response to price is discontinuous and favorable? That is, of course, a possibility when there are multiple second-stage equilibria, and, in that case, it's most likely there won't be any equilibria where both have profit. However, that seems to be a much stronger assumption requiring a particular economic motivation.

		We make no further assumption than the unique continuous deviation as a coordination device. Naturally, a necessary condition for a SPE+ outcome is negative marginal demand, but that is a characteristic of equilibria. Some authors assume negative marginal group demand, either explicitly (e.g., \citet*{Banerji:2009:LNE}) or implicitly through the distribution of preferences. However, this is an assumption, not a necessary condition, which in particular excludes some SPE+. Removing this assumption allows SPE+, for example, under positive network effects. For example, with two groups, the bandwagon effect can have two different directions (one for each group) where one compensates the other. The same dynamic may disrupt equilibria in the negative case. SPE+ is possible as long as the sum of all groups' marginal demand is negative. It may be counterintuitive why some groups' demand could increase with price (or why first stage players imagine so). We do not aim to discuss an economic motivation for why this may be so, but rather to point out that it is a possibility, which produces different equilibria. Probably the most notable discussion of a similar phenomenon is around \citet{Becker:1991:ANO}'s note on restaurant pricing. Becker's proposed explanation is studying particular forms for demand curves that would produce equilibria where one restaurant has excess demand and the other excess capacity. \citet*{Karni:1994:SAA} show that an SPE does not exist in such conditions when restaurants compete in prices. However, Becker's discussion is at the aggregate demand level. Our observation is that, on a group level, positive marginal demand is sustainable at equilibrium.

		The remainder of the work is as follows. In the next section we set up the model and present the main result. In Section \ref{s:dis} we discuss the results through models from the literature, and in Section \ref{s:conc} we conclude. We use $\equiv$ for definitions.

	\section{Model and main theorem.}
	For clarity we will use the nomenclature of firms and consumers, which form the set of players.
	In the first stage two firms $a$ and $b$, simultaneous and independently set a price for their product. In the second stage, consumers observe the vector $\mathbf{p}\equiv (p_a, p_b)$ and simultaneous and independently choose which firm to buy from.
	Consider a finite partition of the set of consumers into $g\in\mathbb{N}$ groups $\mathcal{G}\equiv\{G_1, \ldots, G_g\}$, and let $\mathbf{d}^j\equiv(d_1^j,\ldots, d_g^j)$ describe the number of consumers of each group that choose firm $j$. For each player in a group $G$ that chooses firm $j$, the payoff is\footnote{We are implicitly assuming a continuous set of consumers, and groups as a partition into measurable subsets. However, the results also hold in the finite case. We would just need to redefine payoffs as expected utility and work with mixed (behavior) strategies. The set of equilibria would be more complex, but the utility function would become multilinear. See, for example, our own work in \cite{Soeiro:2021:NNE}.}
	$$-p_j +v^j(\mathbf{d}^j; G), ~~j=a,b,$$
	where $v^j(\cdot)$ is a $C^2$ function capturing the influence groups may have on each other, i.e., network effects. Note it carries the assumption that within each group consumers are indistinguishable. However, the function may also encompass constant terms that represent how (all) elements of a group value each product (may include vertical or horizontal product differentiation), willingness to pay, or other.
	
	Suppose each group $i$ has mass (or size) $m_i$. We will denote by $\sigma_i\in [0,1]$ the fraction of group $i$ who choose firm $a$, hence, $d_i^a=m_i \sigma_i$ and $d_i^b=m_i(1- \sigma_i)$. We will whenceforth use this variable to represent the strategic action of each group, and let $\boldsymbol{\sigma}\equiv (\sigma_1, \ldots, \sigma_g)$ summarize consumption.\footnote{The assumption of mandatory consumption, or total coverage, is common in the literature and it simplifies the exposition. It is, however, not necessary, as we will discuss in the end.} Define now  $v: [0,1]^g\to \mathbb{R}^g$, a $C^2$ function where\footnote{It is a bit abusive to use the same letter, but allows us to economize notation and avoid carrying $\Delta$ around.}
	$$v_i(\boldsymbol{\sigma})\equiv v^a(\mathbf{m}\cdot \boldsymbol{\sigma}; G_i)-v^b(\mathbf{m}\cdot(\mathbf{1}-\boldsymbol{\sigma}); G_i)$$
	Given a pair of prices $\mathbf{p}$, whether $\boldsymbol{\sigma}$ is a second-stage Nash Equilibrium (NE) following $\mathbf{p}$, depends on $v$. It must hold:  $v_i(\boldsymbol{\sigma})\le p_a-p_b$ if $\sigma_i=0$; $v_i(\boldsymbol{\sigma})=p_a-p_b$ if $0<\sigma_i<1$; and $v_i(\boldsymbol{\sigma})\ge p_a-p_b$ if $\sigma_i=1$. As we will assume no cost function for firms (also a simplifying assumption, not necessary as briefly discussed in the final section), this is what determines the game above described, thus denoted by $\Gamma(v; g)\equiv\Gamma(v; g, \mathbf{m})$, and we will often omit $\mathbf{m}$, as group sizes aren't our main focus.
	
	\subsection{Equilibrium characterization (SPE)}
	In general, each game $\Gamma$ has a multiplicity of second stage Nash Equilibria (NE). Consider the correspondence $Q: (\mathbb{R}^+_0)^2 \to [0,1]^g$ where $Q(\mathbf{p})$ is the set of second stage NE following $\mathbf{p}$. Given $\mathbf{p^*}$ and a neighborhood $P(\mathbf{p^*})$, a \emph{local selection} of $Q$ is a function $q: P(\mathbf{p^*})\to [0,1]^g$ such that $q(\mathbf{p})\in Q(\mathbf{p})$ for all $\mathbf{p}\in P(\mathbf{p^*})$.
	For each outcome, to determine if deviations are profitable, each firm makes a local selection. Given $\mathbf{p^*}$, let $P_j$ be a neighborhood of $p_j^*$ for each $j=a,b$. Firm $a$ makes a local selection $q_a$ in neighborhood $P_a\times \{p_b^*\}$, and similarly firm $b$ makes a selection $q_b$ in $ \{p_a^*\}\times P_b$. Restricting each firm's selection to a neighborhood of its own price is not necessary but more standard to discuss the notion of SPE and unilateral deviation. Allowing firms to make different selections is more general than usual, but also not necessary. 
	
	Consider an outcome $(\mathbf{p^*}, \boldsymbol{\sigma})$. In a neighborhood $P$, (imagined) demand is
	$$D_a(p_a; q_a)=\mathbf{m}\cdot q_a(p_a, p_b^*), ~~~D_b(p_b; q_b)=\mathbf{m}\cdot (\mathbf{1}-q_b(p_a^*, p_b)).$$
	Note that, for a local selection, demand is a function of only one variable: own price. We assume firms have no costs, so profit for each firm along these selections is $\Pi_j(p_j; q_j)=p_j D_j$. (Note that these are all defined for a given outcome and some neighborhood of prices. At $\mathbf{p^*}$, we have actual outcome profits and demand, not imagined.)
	
	A triple $(\mathbf{p}^*, q_a, q_b)$ is a \textit{local} SPE in the standard way, that is, if for all $p_j$ in the neighborhood of $p_j^*$ it holds $(i)$ $q_a$ and $q_b$ are local selections (from $Q$); and $(ii)$ $\Pi_j(p_j; q_j)\le \Pi_j(p_j^*, q_j)$.
	
	\subsection{Results (SPE+)}
	Given a consumption profile $\boldsymbol{\sigma}$, we say that a group $i$ \textit{splits} if $0<\sigma_i<1$, denoting by $S(\boldsymbol{\sigma})\subseteq \mathcal{G}$ the groups that split, and $\bar{S}(\boldsymbol{\sigma})\subseteq \mathcal{G}$ those that don't. Thus, $\{S(\boldsymbol{\sigma}), \bar{S}(\boldsymbol{\sigma})\}$ is a partition of $\mathcal{G}$ induced by $\boldsymbol{\sigma}$, but for simplicity we will in general omit the dependence on $\boldsymbol{\sigma}$. We say that $\boldsymbol{\sigma}$ is a split if $S\neq\emptyset$, i.e., if at least one group splits. The next definition is based on the observation that for $(\mathbf{p}, \boldsymbol{\sigma})$ to be a SPE+ outcome it must be that $\boldsymbol{\sigma}\in Q(\mathbf{p})$ with at least one group that is indifferent ($v_i(\boldsymbol{\sigma})=\Delta p$). As we want to look at interior points we will focus on outcomes where $S(\boldsymbol{\sigma})\neq\emptyset$\footnote{This is in fact a necessary condition for equilibria where both firms have profit. Suppose $S=\emptyset$. If there are no indifferent consumers, then demand doesn't change in response to small price deviations. If there are indifferent consumers, but the groups to which they belong don't split (all choose the same firm), then one firm can provoke a jump for any small price deviation. Either way it would provide an incentive to deviations. This is general, not particular for this model. When the option not to buy is present, it must be that, if no group is indifferent between the two firms, there is a at least one group indifferent between buying or not. Nevertheless, we still need Definition \ref{d:ss} to force all indifferent groups out of $\bar{S}$.} but also where all groups that are indifferent between firms split.
	\begin{definition}[Stable splits]\label{d:ss}
		A split $\boldsymbol{\sigma}$ is \emph{stable} if, given any $i\in S$: $(i)$ for all $j\in S$, $v_{j}(\boldsymbol{\sigma})=v_i(\boldsymbol{\sigma})$; and $(ii)$ for all $j\in \bar{S}$, $v_{j}(\boldsymbol{\sigma})\neq v_i(\boldsymbol{\sigma})$.
	\end{definition}
	As discussed above, condition $(i)$ is necessary for second stage NE. Condition $(ii)$ is meant to focus on interior points, those where the NE inequality is strict.
	
	Consider now the restriction to $S$ of the Jacobian matrix of $v$, denoted $J_{ v}(\boldsymbol{\sigma};S)\equiv \left[(\partial v_i/\partial \sigma_{i'})(\boldsymbol{\sigma}), i,i' \in S\right]$. It can be seen as the marginal influence network among groups that split. For each group $i\in S$ define also the Hessian matrix restriction $H_{v_i}(\boldsymbol{\sigma};S)\equiv \left[\frac{\partial^2 v_i}{\partial \sigma_{i'}\partial \sigma_{j}}(\boldsymbol{\sigma}), i',j \in S\right]$. Let $l$ be the cardinal of $S$ (number of groups that split), and let us define for a given split $\boldsymbol{\sigma}$, the following two vectors, $\mathbf{k}\equiv(k_1,\ldots,k_l)$ and $\mathbf{r}\equiv(r_1,\ldots,r_l)$ where
	$$k_i(\boldsymbol{\sigma};S)\equiv\dfrac{\sum_{i'}C_{i'i}(J_{ v})}{\det(J_{ v})}, ~~~ r_i(\boldsymbol{\sigma};S)\equiv\dfrac{\sum_{i'}\mathbf{k} H_{v_{i'}}\mathbf{k}^{T} C_{i'i}(J_{ v})}{\det(J_{ v})}.$$
	and where $C_{ii'}(\cdot)$ denotes the $ii'-$cofactor (for simplicity we have omitted the dependence on $(\boldsymbol{\sigma};S)$ for all vectors and matrices in the expression). The vectors $\mathbf{k}$ and $\mathbf{r}$ capture the first and second order reaction to changes in consumption among groups that split. Let finally
	$$K_S(\boldsymbol{\sigma})\equiv \sum_{i\in S} m_i k_i(\boldsymbol{\sigma};S), ~\textrm{and} ~ R_S(\boldsymbol{\sigma})\equiv \sum_{i\in S} m_i r_i(\boldsymbol{\sigma};S).$$
	
	\begin{lemma}\label{l:uq}
		For an outcome $(\mathbf{p}^*, \boldsymbol{\sigma})$ where $\boldsymbol{\sigma}$ is a stable split in $Q(\mathbf{p^*})$, if $J_v(\boldsymbol{\sigma};S)\neq 0$, there is a unique continuous local selection $q$ such that $q(\mathbf{p^*})=\boldsymbol{\sigma}$.
		Furthermore, $q$ is $C^2$, $D'_j(p_j^*; q)=K_S(\boldsymbol{\sigma})$ and $D''_j(p_j^*; q)=R_S(\boldsymbol{\sigma})$.
	\end{lemma}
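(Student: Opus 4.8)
The plan is to apply the implicit function theorem to the system of indifference equalities among the splitting groups, with the non-splitting groups held fixed at their corners. Write $\Delta p^{*}\equiv p_a^{*}-p_b^{*}$, and let $\boldsymbol{\sigma}_S$, $\boldsymbol{\sigma}_{\bar S}$ be the restrictions of $\boldsymbol{\sigma}$ to $S$ and $\bar S$. Since $\boldsymbol{\sigma}$ is a stable split in $Q(\mathbf{p}^{*})$ and a split has $S\neq\emptyset$: for every $i\in S$ we have $\sigma_i\in(0,1)$ and, by the second-stage NE equality at an interior coordinate, $v_i(\boldsymbol{\sigma})=\Delta p^{*}$; and for every $j\in\bar S$ we have $\sigma_j\in\{0,1\}$ and $v_j(\boldsymbol{\sigma})\neq\Delta p^{*}$, the inequality being condition $(ii)$ of Definition \ref{d:ss} combined with $v_i(\boldsymbol{\sigma})=\Delta p^{*}$. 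Define $F_i(\mathbf{x},t)\equiv v_i(\mathbf{x},\boldsymbol{\sigma}_{\bar S})-t$ for $\mathbf{x}$ near $\boldsymbol{\sigma}_S$ and $t$ near $\Delta p^{*}$, i.e.\ $v_i$ evaluated at the profile whose $S$-coordinates are $\mathbf{x}$ and whose $\bar S$-coordinates are frozen at $\boldsymbol{\sigma}_{\bar S}$. Then $F\equiv(F_i)_{i\in S}$ is $C^{2}$, $F(\boldsymbol{\sigma}_S,\Delta p^{*})=\mathbf{0}$, and $\partial F/\partial\mathbf{x}$ at that point is exactly $J_{v}(\boldsymbol{\sigma};S)$, which is invertible (here $J_v\neq 0$ is understood as $\det J_v\neq 0$, consistent with the formulas defining $\mathbf{k}$ and $\mathbf{r}$). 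The implicit function theorem then yields a neighborhood $I$ of $\Delta p^{*}$ and a unique $C^{2}$ map $\phi$ with $\phi(\Delta p^{*})=\boldsymbol{\sigma}_S$ and $F(\phi(t),t)=\mathbf{0}$ on $I$.

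For existence of the selection, take a neighborhood $P$ of $\mathbf{p}^{*}$ with $p_a-p_b\in I$ throughout, and set $q(\mathbf{p})$ to have $S$-coordinates $\phi(p_a-p_b)$ and $\bar S$-coordinates the constant $\boldsymbol{\sigma}_{\bar S}$. Shrinking $P$, one checks $q(\mathbf{p})\in Q(\mathbf{p})$: the splitting coordinates obey $v_i(q(\mathbf{p}))=p_a-p_b$ by construction and stay in $(0,1)$ by continuity; and for $j\in\bar S$ the pertinent NE inequality between $v_j(q(\mathbf{p}))$ and $p_a-p_b$ ($\le$ if $\sigma_j=0$, $\ge$ if $\sigma_j=1$) holds strictly at $\mathbf{p}^{*}$, hence on a neighborhood, by continuity. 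So $q$ is a continuous, indeed $C^{2}$, local selection with $q(\mathbf{p}^{*})=\boldsymbol{\sigma}$. For uniqueness, let $\tilde q$ be any continuous local selection with $\tilde q(\mathbf{p}^{*})=\boldsymbol{\sigma}$. For $i\in S$, continuity and $\sigma_i\in(0,1)$ keep $\tilde q_i(\mathbf{p})\in(0,1)$ near $\mathbf{p}^{*}$, so the NE conditions force $v_i(\tilde q(\mathbf{p}))=p_a-p_b$. For $j\in\bar S$, if $\tilde q_j(\mathbf{p})\in(0,1)$ the NE conditions force $v_j(\tilde q(\mathbf{p}))=p_a-p_b$, impossible for $\mathbf{p}$ near $\mathbf{p}^{*}$ because $v_j(\tilde q(\mathbf{p}))\to v_j(\boldsymbol{\sigma})\neq\Delta p^{*}$ while $p_a-p_b\to\Delta p^{*}$; hence $\tilde q_j(\mathbf{p})\in\{0,1\}$ and, being near $\sigma_j$, equals $\sigma_j$. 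Thus $\tilde q$ freezes $\bar S$ at $\boldsymbol{\sigma}_{\bar S}$ and its $S$-part solves $F(\,\cdot\,,p_a-p_b)=\mathbf{0}$ near $\mathbf{p}^{*}$, so by the local uniqueness in the implicit function theorem $\tilde q=q$ there.

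Finally I would read off the demand derivatives. Along $q$ one has $D_a(p_a;q_a)=\sum_{i\in S}m_i\phi_i(p_a-p_b^{*})+\sum_{j\in\bar S}m_j\sigma_j$ and $D_b(p_b;q_b)=\sum_{i\in S}m_i\big(1-\phi_i(p_a^{*}-p_b)\big)+\sum_{j\in\bar S}m_j(1-\sigma_j)$, so the derivatives reduce to those of $\phi$ at $\Delta p^{*}$ with chain-rule factors $\partial(p_a-p_b)/\partial p_a=1$ and $\partial(p_a-p_b)/\partial p_b=-1$. Differentiating $F(\phi(t),t)=\mathbf{0}$ once gives $J_{v}(\boldsymbol{\sigma};S)\,\phi'(\Delta p^{*})=\mathbf{1}$, so $\phi'(\Delta p^{*})=J_{v}^{-1}\mathbf{1}$; expanding the inverse through the adjugate, $\phi_i'(\Delta p^{*})=\sum_{i'}C_{i'i}(J_{v})/\det(J_{v})=k_i$, hence $D_j'(p_j^{*};q)=\sum_{i\in S}m_ik_i=K_S(\boldsymbol{\sigma})$. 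Differentiating once more gives $J_{v}(\boldsymbol{\sigma};S)\,\phi''(\Delta p^{*})=-\big(\mathbf{k}\,H_{v_i}\mathbf{k}^{T}\big)_{i\in S}$, and the same cofactor expansion together with the chain-rule signs identifies the resulting combination of $\phi''$ with $\mathbf{r}$, giving $D_j''(p_j^{*};q)=R_S(\boldsymbol{\sigma})$.

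The real obstacle is the uniqueness claim, not existence: the implicit function theorem only gives uniqueness among profiles in which the splitting coordinates stay interior, so one genuinely needs condition $(ii)$ of Definition \ref{d:ss} — with continuity of $v$ and of the selection, and the NE inequalities — to exclude competing continuous selections in which some non-splitting group leaves its corner (or the set $S$ changes) arbitrarily near $\mathbf{p}^{*}$. The remaining content (the $C^{2}$ regularity and the two derivative formulas) is just the implicit function theorem plus Cramer's rule.
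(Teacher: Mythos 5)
Your proof is correct and follows essentially the same route as the paper's: freeze the non-splitting groups at their corners (justified via stability condition $(ii)$ and the strict NE inequalities), apply the implicit function theorem to the indifference system of the splitting groups, and read off the demand derivatives $K_S$ and $R_S$ through Cramer's rule/the adjugate, exactly as in the appendix. The differences are cosmetic — you parameterize by $\Delta p$ rather than by $\mathbf{p}$, and you spell out the uniqueness of the continuous selection (ruling out selections where a group in $\bar{S}$ leaves its corner) more explicitly than the paper does.
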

	The response of demand to prices is given by the interdependence of groups that split. Stable splits guarantee that $\bar{S}$ is less sensitive, so that when the choice of a group is cohesive, it remains unchanged for small price deviations. The proof uses an implicit function theorem, and is left for the appendix. Let us omit the dependence on $\boldsymbol{\sigma}$ by writing $K_S$ and $R_S$.
	\begin{definition}\label{d:real}
		A split $\boldsymbol{\sigma}$ is \emph{realizable} if
		\begin{enumerate}
			\item $K_S<0$;
			\item $-\frac{1}{\mathbf{m}\cdot (\mathbf{1}-\boldsymbol{\sigma})}<\dfrac{R_S}{2K_S^2}<\frac{1}{\mathbf{m}\cdot \boldsymbol{\sigma}}.$
		\end{enumerate}
	\end{definition}
	A split is realizable if the first and second-order conditions ensure the split leads to a local maximum for the first stage along the unique continuous local selection. Note that for realizable splits \textit{not to exist} in a given game, a condition much stronger than may appear at first must hold. It would be necessary that, for every $S^*\subseteq\mathcal{G}$ and every $\boldsymbol{\sigma}$ such that $S(\boldsymbol{\sigma})=S^*$, condition $1$ or $2$ is not satisfied.  In particular, as $K_S$ is the sum of all $k_i$ for $i\in S$, consistently breaking condition $1$ imposes a restriction on all possible combinations. For example, in the linear case, where $R_S=0$ and $K_S$ is constant (for each $S$), it is sufficient that at least for one group $G$ we have $K_{G}<0$. In that case, a profile where only that group splits is a realizable split. We will discuss this further in the next section.
	
	Next, we state the main result. Let us first define the following price function,
	$$\psi(\boldsymbol{\sigma})=\frac{1}{-K_S}\left(\mathbf{m}\cdot \boldsymbol{\sigma},  \mathbf{m}\cdot (\mathbf{1}-\boldsymbol{\sigma})\right).$$

	\begin{theorem}\label{t.main}
		Let $\boldsymbol{\sigma}$ be a realizable stable split.\\
		If $\boldsymbol{\sigma}\in Q(\psi(\boldsymbol{\sigma}))$ then $(\mathbf{p}^*, q(p_a, p_b^*), q(p_a^*, p_b))$ is a local SPE with profit for both firms, where $\mathbf{p^*}=\psi(\boldsymbol{\sigma})$ and $q$ is the continuous local selection associated with the outcome $(\mathbf{p}^*, \boldsymbol{\sigma})$.
	\end{theorem}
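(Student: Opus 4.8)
The plan is to verify, condition by condition, that the triple $\bigl(\mathbf{p}^*, q(p_a, p_b^*), q(p_a^*, p_b)\bigr)$ meets the definition of a local SPE and then that both firms make strictly positive profit; everything substantive is carried by Lemma~\ref{l:uq}. Write $q_a(p_a) \equiv q(p_a, p_b^*)$ and $q_b(p_b) \equiv q(p_a^*, p_b)$, with $\mathbf{p}^* = \psi(\boldsymbol{\sigma})$.

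First I would produce the continuous selection and dispatch condition~(i). Realizability gives $K_S < 0$; in particular $\det J_v(\boldsymbol{\sigma}; S) \neq 0$, since it appears in the denominator defining each $k_i$, so $J_v(\boldsymbol{\sigma}; S) \neq 0$. By hypothesis $\boldsymbol{\sigma} \in Q(\psi(\boldsymbol{\sigma})) = Q(\mathbf{p}^*)$ and $\boldsymbol{\sigma}$ is a stable split, so Lemma~\ref{l:uq} applies to $(\mathbf{p}^*, \boldsymbol{\sigma})$: there is a unique continuous local selection $q$ on a neighbourhood $P(\mathbf{p}^*)$ with $q(\mathbf{p}^*) = \boldsymbol{\sigma}$, $q$ is $C^2$, and $D_j'(p_j^*; q) = K_S$ while $D_j''(p_j^*; q)$ is the value recorded in the lemma. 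Restricting $q$ to $P_a \times \{p_b^*\}$ and $\{p_a^*\} \times P_b$, with $P_a, P_b$ chosen so the slices lie in $P(\mathbf{p}^*)$, yields local selections $q_a, q_b$; that is condition~(i).

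Next I would check the first-order conditions -- this is exactly what forces $\mathbf{p}^* = \psi(\boldsymbol{\sigma})$. With no costs, $\Pi_a(p_a; q_a) = p_a\,\mathbf{m}\cdot q_a(p_a)$ is $C^2$ in $p_a$ (as $q$ is $C^2$), and using $q(\mathbf{p}^*) = \boldsymbol{\sigma}$ and $D_a'(p_a^*; q) = K_S$ one gets $\Pi_a'(p_a^*; q_a) = \mathbf{m}\cdot\boldsymbol{\sigma} + p_a^* K_S$, which vanishes iff $p_a^* = (\mathbf{m}\cdot\boldsymbol{\sigma})/(-K_S)$, the first coordinate of $\psi(\boldsymbol{\sigma})$; symmetrically $\Pi_b'(p_b^*; q_b) = \mathbf{m}\cdot(\mathbf{1}-\boldsymbol{\sigma}) + p_b^* K_S = 0$ iff $p_b^* = (\mathbf{m}\cdot(\mathbf{1}-\boldsymbol{\sigma}))/(-K_S)$, the second coordinate. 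For the second-order conditions, $\Pi_j''(p_j^*; q_j) = 2K_S + p_j^* D_j''(p_j^*; q)$; substituting the two coordinates of $\psi(\boldsymbol{\sigma})$, inserting $D_j''$ from Lemma~\ref{l:uq}, and clearing the positive factor $-K_S$, the requirement $\Pi_a''(p_a^*; q_a) < 0$ becomes the upper bound $\frac{R_S}{2K_S^2} < \frac{1}{\mathbf{m}\cdot\boldsymbol{\sigma}}$ and $\Pi_b''(p_b^*; q_b) < 0$ the lower bound $-\frac{1}{\mathbf{m}\cdot(\mathbf{1}-\boldsymbol{\sigma})} < \frac{R_S}{2K_S^2}$, i.e.\ together exactly Definition~\ref{d:real}(2). (The bound is genuinely two-sided because $a$ and $b$ shift the indifference gap $p_a - p_b$ in opposite directions, so the curvature term $R_S$ enters their profit functions with opposite sign.) Since $\Pi_j(\cdot; q_j)$ is $C^2$ with zero first and strictly negative second derivative at $p_j^*$, a Taylor expansion gives a neighbourhood of $p_j^*$ on which $\Pi_j(p_j; q_j) < \Pi_j(p_j^*; q_j)$ for $p_j \neq p_j^*$; intersecting it with the neighbourhood from Lemma~\ref{l:uq} delivers condition~(ii).

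Finally, positive profit is immediate: a split has some $i \in S$ with $0 < \sigma_i < 1$, so $\mathbf{m}\cdot\boldsymbol{\sigma} > 0$ and $\mathbf{m}\cdot(\mathbf{1}-\boldsymbol{\sigma}) > 0$, and with $K_S < 0$ this makes $p_a^*, p_b^* > 0$ and $\Pi_a(p_a^*) = (\mathbf{m}\cdot\boldsymbol{\sigma})^2/(-K_S) > 0$, $\Pi_b(p_b^*) = (\mathbf{m}\cdot(\mathbf{1}-\boldsymbol{\sigma}))^2/(-K_S) > 0$. The only real obstacle is Lemma~\ref{l:uq} itself (the implicit-function-theorem construction of the unique continuous selection and the identification of its derivatives with $K_S$ and $R_S$), which is handled separately; within the theorem the only delicate points are the sign/constant bookkeeping in turning $\Pi_j'' < 0$ into Definition~\ref{d:real}(2) and making sure conditions~(i) and~(ii) are stated on a common neighbourhood -- I do not anticipate any essential difficulty there.
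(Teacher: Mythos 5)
Your proposal is correct and follows essentially the same route as the paper's (very terse) proof: invoke Lemma~\ref{l:uq} for the unique $C^2$ selection, verify the first-order condition to identify $\mathbf{p}^*=\psi(\boldsymbol{\sigma})$ as a critical point, translate the second-order conditions into the two bounds of Definition~\ref{d:real}, and conclude SPE+ since $\boldsymbol{\sigma}$ is a second-stage Nash equilibrium with strictly positive prices and demands. You are in fact more careful than the paper on one point -- noting that the curvature term enters firm $b$'s profit with the opposite sign (since $p_b$ moves $\Delta p$ in the opposite direction), which is exactly what makes condition 2 of Definition~\ref{d:real} two-sided rather than two upper bounds.
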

	\begin{proof}
		Suppose $\boldsymbol{\sigma}$ is a realizable stable split in $Q(\psi(\boldsymbol{\sigma}))$. Then, Lemma \ref{l:uq} is applicable. This guarantees that (standard profit derivation along $q$)  $\mathbf{p}^*$ is a critical point. As it is realizable, conditions $1$ and $2$ of Definition \ref{d:real} guarantee it is a local maximum for both firms. As it is a second-stage Nash, it is a local SPE+.
	\end{proof}
	
	\section{Discussion and examples}\label{s:dis}
	Two natural questions arise from Theorem \ref{t.main} (essentially amounting to whether it is empty):
	$(i)$ does a given game $\Gamma$ have realizable splits?
	$(ii)$ is at least one of these splits stable and a solution to $\boldsymbol{\sigma}\in Q(\psi(\boldsymbol{\sigma}))$?
	
	The answer to $(i)$ depends on the marginal properties of $v$ and is the main focus of this work. Question $(ii)$ amounts to whether there is a realizable stable split solution to
	\begin{equation}\label{e.s0}
		v_i(\boldsymbol{\sigma})=\mathbf{m}\cdot(2\boldsymbol{\sigma}-\mathbf{1})/K_S(\boldsymbol{\sigma}), ~ \forall i\in S.
	\end{equation}
	The right-hand side of Equation \ref{e.s0} is the difference of equilibrium prices $p_a^*-p_b^*$. Satisfying the equation means $\boldsymbol{\sigma}$ is a second-stage NE for $\mathbf{p^*}$. Question $(i)$ ensures first-stage (pricing game) solutions, while $(ii)$ second-stage (consumption game) solutions compatible with $(i)$. This section of the work is divided into two subsections where we discuss the above two questions through models in the literature.
	
	\subsection{(i) Realizable splits.}
	\subsubsection{One group, $g=1$.}
	Let us start by looking at games where there is only one group $\Gamma(v;1)$. All NE splits are stable. Now, we have $K(\sigma)=m(v'(\sigma))^{-1}$ and if $v''(\sigma)\neq 0$, then $R(\sigma)=\frac{K^2 v'(\sigma)}{2v''(\sigma)}$; if $v''(\sigma)=0$, then $R(\sigma)=0$. As such when $v$ is non-decreasing (positive network effects) there are no realizable splits. If $v$ is decreasing for at least some point (negative network effects), realizable splits may exist. In particular, if $v$ is linear, $K$ and $R$ are constant and $R=0$. If $K<0$ all splits are realizable.
	
	\begin{example}[$\Gamma(v;1)$; \cite{Grilo:2001:PCW}]
		The authors consider the following case: $v^j(d^j)=\alpha d^j-\beta (d^j)^2-t(x-x_j)^2$ with mass of consumers $m$. Let us consider the case with no product differentiation $t=0$. Then, $v(\sigma)=(2\sigma-1)(\alpha m-\beta m^2)$, $K=[2(\alpha-\beta m)]^{-1}$ and $R=0$. As such, whenever $\alpha<\beta m$ every split is realizable, that is, when negative effects overcome positive ones. In that case, there is a unique realizable split solution to Equation \ref{e.s0}, $\sigma=1/2$. If there are only positive network effects, $\beta=0$, there are no realizable splits.
	\end{example}

	Some variations with only one group include a dependence of $\alpha$ on the firms. \citet{Tolotti:2020:HBD}, in the case of $t=0$, discuss $v^j(d^j)=\alpha^j d^j$. In this case $v(\sigma)=(\alpha^a+\alpha^b) m\sigma-\alpha^b m$. Hence, $K=(\alpha^a+\alpha^b)^{-1}$. Again, no realizable split unless negative effects overcome positive effects. Assuming $\alpha^a+\alpha^b< 0$, all splits are realizable and stable. However, there is a unique solution to Equation \ref{e.s0}, $\sigma=\frac{\alpha^a}{\alpha^a+\alpha^b}$ leading to equilibrium prices $p_j=-m\alpha^j$, so it is necessary that $\alpha^j<0$, for $j=a,b$.
	
	The general idea is that, with $g=1$, if positive network effects dominate, ceteris paribus, there are no realizable splits. 
	\subsubsection{Multilinear case, $g>1$.}
	Consider the multilinear extension for more groups,
	$$v_i^j(\mathbf{d}^j)=\sum_{i'=1}^{g}\alpha_{ii'}^j d_{i'}^j$$
	where $\alpha_{ii'}^j\in\mathbb{R}$ represents how a member of group $i$ is influenced by a member of group $i'$ when both choose firm $j$.
	Let us define $w_{ii'}\equiv \alpha_{ii'}^a +\alpha_{ii'}^b$. Note that
	$$v_i(\boldsymbol{\sigma})=\sum_{i'=1}^{g}m_{i'}w_{ii'}\sigma_{i'}-\sum_{i'=1}^{g}\alpha_{ii'}^b m_{i'}.$$
	Thus, $\frac{\partial v_{i}}{\partial \sigma_{i'}}= m_{i'}w_{ii'}$. One of the advantages of the linear case is that $K_S(\boldsymbol{\sigma})$ and $R_S(\boldsymbol{\sigma})$ do not depend directly on $\boldsymbol{\sigma}$ but only on $S$, the groups that split. Furthermore, $R_S=0$ for all $S$. We separate $m_i$ from $\alpha$ because $K_S$ does not depend on $m_i$ either\footnote{This is because the product of $m_i$'s cancel out in the numerator and denominator expression for $K$. However, each $k_i$ depends on $\mathbf{m}$. This will be clear in Example \ref{ex:g2}.}, and it is useful to work with the matrix $W\equiv W(\mathcal{G})\equiv(w_{ii'})$.  As any $K_S$ is determined by a submatrix of $W$, a multilinear game $\Gamma$ is determined by $W$ and we can identify the specific game considered by $\Gamma(W; g)$.
	
	\begin{example}[$\Gamma(W; 2)$]\label{ex:g2}
		An advantage of $g=2$ is that there are only two types of splits. \emph{Total}, where both groups split, and \emph{singular}, where only one group splits. Let $\mathcal{G}=\{G_1, G_2\}$, $\mathbf{m}=(m_1,m_2)$ and
		$$
		W\equiv 
		\begin{pmatrix}
			w_{11}& w_{12}\\ 
			w_{21}	 & w_{22}
		\end{pmatrix}.
		$$
		For total splits,
		$$K_\mathcal{G}=\frac{(w_{11}+w_{22})-(w_{12}+w_{21})}{w_{11}w_{22}-w_{12}w_{21}}.$$
		For singular splits, $K_{G_i}=(w_{ii})^{-1}$. As $R_S=0$, realizable splits require only that $K_S<0$. However, recall that $K_S= \sum_{i\in S} m_i k_i(S)$. Here,
		$$k_1=\frac{(w_{22}-w_{12})m_2}{(w_{11}w_{22}-w_{12}w_{21})m_1m_2}, ~~k_2=\frac{(w_{11}-w_{21})m_1}{(w_{11}w_{22}-w_{12}w_{21})m_1m_2},$$
		which need not necessarily be both negative. Moreover, it shows why $K_\mathcal{S}$ is independent of $\mathbf{m}$
		
		Although realizable singular splits require negative network effects, i.e., negative entries in $W$, it is possible to have realizable (total) splits even when there are only positive entries. For example, consider
		$$
		\underline{W}=
		\begin{pmatrix}
			1& 2\\ 
			3	 & 5
		\end{pmatrix},
		$$
		Note that $k_1=-3/m_1$, $k_2=2/m_2$ and $K_{\mathcal{G}}=-1$. Hence total splits are realizable (the unique type of realizable splits), and this is independent of $\mathbf{m}$. If we change the sign in all entries on $\underline{W}$, then the results reverse. Total splits would not be realizable, but singular splits would.
	\end{example}
	
	\begin{example}[\cite{Amaldoss:2005:POC}]
		The following case is considered (here we have assumed no product differentiation): $\mathcal{G}=\{l,c\}$, group sizes are $m_l=\beta$ and $m_c=1-\beta$ for $\beta\in (0,1)$, and
		$$v_l^j=-\lambda_l d^j, ~~~ v_c^j=\lambda_c d^j,$$
		where $\lambda_G>0$. Group $l$ are called snobs/ leaders and group $c$ conformists/ followers. Observe that although there are two groups, both look at the same aggregate. Let's look at the following modified game $\Gamma_{\delta}(W,2)$ where
		$$
		W=
		\begin{pmatrix}
			-\lambda_l		& -\lambda_l\\ 
			\lambda_c+\delta	   & 	\lambda_c
		\end{pmatrix}.
		$$
		Network effects in \cite{Amaldoss:2005:POC} correspond to $\delta=0$. The authors use a bound on conformism coupled with an assumption on the distribution of preferences to ensure a SPE+ (we discuss this further in the next subsection). The case of no product differentiation corresponds to a game $\Gamma_0(W,2)$. In this case, total splits are not realizable, because the $\det(W)=0$ so $K_{\mathcal{G}}$ is not defined. However, $K_l=(-\lambda_l)^{-1}$ and singular splits where only snobs ($l$) split are realizable. This is similar to a game with only snobs.
		
		In the modified game we propose, i.e., when $\delta\neq0$, conformists distinguish between snobs and other conformists. So there are two aggregate variables in play as $d^j$ becomes $\mathbf{d}^j=(d_l^j, d_c^j)$. If $\delta>0$ conformists prefer snobs (leaders), if $\delta<0$, conformists prefer other conformists (to maintain the game idea we assume $\delta>-\lambda_c$). Note that, for $\delta\neq0$, we have $K_{\mathcal{G}}=(-\lambda_l)^{-1}$. So, total splits are now also realizable. Interestingly, realizability will not depend on $\delta$ nor on $\lambda_c$, the influence among conformists. However, although $K_{\mathcal{G}}=K_l$, outcomes will differ for the two types of split. For singular splits, if all conformists choose $a$, $p_a^*=\lambda_l(1-\beta\sigma_l)$; and if all conformists choose $b$, $p_a^*=\lambda_l\beta\sigma_l$. For total splits, $p_a^*=\lambda_l(\beta\sigma_l+(1-\beta)\sigma_c)$. In both cases, prices are proportional to the degree of snobbishness.
	\end{example}

	\begin{example}[\cite{Armstrong:2006:CIT}]
		The following structure is assumed,
		$$
		W=
		\begin{pmatrix}
			0& w_{12}\\ 
			w_{21}	 & 0
		\end{pmatrix}, ~~\textrm{thus}~~K_\mathcal{G}=\frac{w_{12}+w_{21}}{w_{12}w_{21}},
		$$
		where $w_{12}, w_{21}>0$. There are no realizable splits. The  Hotelling specification is used, with costs $t_1$ and $t_2$ for each group, which coupled with the assumption that $16 t_1t_2> (w_{12}+w_{21})^2$ produces SPE+. That is when network (externality) parameters are small compared to differentiation ones.\footnote{Note that this is consistent with Armstrong's model. The author considers $m_1=m_2=1$. We get $p_a^*=(\sigma_1+\sigma_2) \frac{-w_{21}w_{12}}{w_{21}+w_{12}}<0$, thus no equilibria, and exactly the price expression presented there if one considers no transportation costs (observe that $w_{ij}=2\alpha_j$ for the parameters $\alpha_j$ considered there).}
		
		Suppose we change the structure, for example, by allowing intra-group $2$ influence in the following way,
		$$
		W=
		\begin{pmatrix}
			0& w_{12}\\ 
			w_{21}	 & w_{12}+w_{21} +\delta
		\end{pmatrix}, ~~\textrm{then}~~K_\mathcal{G}=\frac{\delta}{-w_{12}w_{21}},
		$$
		for some $\delta>0$. Total splits are now realizable. Note that it is necessary that the intra-group influence introduced overcomes the sum of cross-group.
		
		However, if one wishes a model with a zero diagonal matrix (no intra-group influences), then at least 3 groups are necessary to guarantee the existence of realizable splits. Suppose you have 3 groups and consider the following structure,
		$$
		W=
		\begin{pmatrix}
			0& 2r & r\\ 
			z& 0&\frac{rz}{4r+z}-\varepsilon \\ 
			\frac{z}{2} &  0 & 0 \\ 
		\end{pmatrix},
		$$
		where $r,z>0$ and $0<\varepsilon<\frac{rz}{4r+z}$.  The entries, and thus network effects, are all positive. However, $K_{\mathcal{G}}<0$ and the total split is realizable.
	\end{example}

	\begin{example}[Adjacency matrix.]\label{ex:adj}
	In some models, a graph is given a priori representing the connections among groups. For example \cite{Banerji:2009:LNE} consider (for the case of incompatible products),
	$$v_i^j(\mathbf{d}^j)=\sum_{i'\in N(i)}d_{i'}^j,$$
	where $N(i)$ are the neighbors of $i$, including itself. This is an instance of the multilinear case, where $\alpha_{ii'}^a=\alpha_{ii'}^b\equiv\alpha_{ii'}$; $\alpha_{ii'}=\alpha_{i'i}$; and $\alpha_{ii'}\in\{0,1\}$. The latter reflects whether groups $i$ and $i'$ are connected. Let $A$ be the adjacency matrix of the given (unweighted and undirected) graph. As $w_{ii'}=2\alpha_{ii'}$, $W=2A$, thus $\det(W)=2^{g}\det(A)$, and $C_{ii'}(W)=2^{(g-1)}C_{ii'}(A)$. Hence, given $K_S^*$ in a game $\Gamma(W,g)$, for the game $\Gamma(A,g)$ we have $2K_S=K_S^*$. As such, to check if realizable splits exist, it suffices to study the adjacency matrix.\footnote{Note that the same holds for $\mathbf{k}=2\mathbf{k}^*$}
	
	Naturally, whether realizable splits exist depends on the type of network under consideration. In particular, for the total split we need to look at the whole adjacency matrix, for partial splits, it is necessary to look at the adjacency matrix of the subgraph induced by $S\subseteq\mathcal{G}$. For example, the complete network has $\det(W_{complete})=0$, and as any induced subgraph is also complete, $K_S$ is not defined for any $S$ and no realizable split exists, whether partial or total. For the star network with loops, $K_{\mathcal{G}-star}=\frac{1}{2}$ (for $\Gamma(W_{star};g)$), thus total splits are not realizable.\footnote{This can be shown by induction, but we omit the proof here.} A subgraph is either again a star network if the central node splits (is in $S$), or a graph composed by loops only. In the latter, the relevant matrix has only 1's in the diagonal, leading to $K_S=\frac{1}{2}$. So for the star network there are no realizable splits. In Figure \ref{fig:netex} is an example of a network where total splits are realizable. We have made a small modification from a star network.
	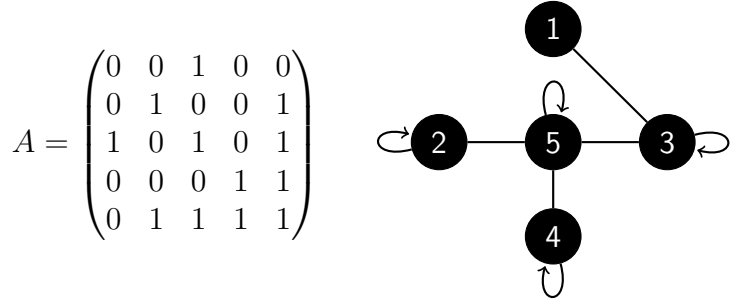
\begin{figure}[htb]
		\centering
		$\displaystyle
		A=
		\begin{pmatrix}
			0 & 0 & 1 & 0 & 0 \\
			0 & 1 & 0 & 0 & 1 \\
			1 & 0 & 1 & 0 & 1 \\
			0 & 0 & 0 & 1 & 1 \\
			0 & 1 & 1 & 1 & 1
		\end{pmatrix}
		~~~~
		\begin{tikzpicture}
			[baseline=(2.base),bn/.style={circle,fill,draw,text=white,font=\sffamily, minimum size=6mm}, every node/.append style={bn}]
			\node (1) at (2.5,3) {1};  
			\node (2) at (1,1.5)  {2};  
			\node (3) at (4,1.5)  {3};  
			\node (4) at (2.5,0.25) {4};  
			\node (5) at (2.5,1.5)  {5};
			\draw[thick] (4)--(5) (3)--(5) (2)--(5) (1)--(3);
			\path
				(2) edge [loop left, thick]  (2)
				(3) edge [loop right,, thick]  (3)
				(4) edge [loop below, thick]  (4)
				(5) edge [loop above, thick]  (5);
		\end{tikzpicture}
	$
		\caption{A network inducing realizable total splits for $\Gamma(2A;5)$. Considering $W=2A$ we get $K_{\mathcal{G}}=-0.5$ and $\mathbf{k}=(-1/m_1, -0.5/m_2, 0.5/m_3, -0.5/m_4, 1/m_5)$.}
		\label{fig:netex}
	\end{figure}
	
		The example in Figure \ref{fig:netex} is not unique, nor specific to $g=5$. We did however need to increase the number of groups to $5$ to find a solution. If $g>5$ solutions exist too and become easier to find. Moreover, $A$ could be the base for a realizable partial split in a game with $g>5$. We have chosen a symmetric matrix to fall within the assumption of an undirected (and unweighted) network. However, as we will see, being unweighted forces an increase in dimension to find an example, while being undirected produces symmetric outcomes.
	
		The fact that realizable splits exist in such cases might appear to contradict the result in \cite{Banerji:2009:LNE}. Their Lemma 1 states no split can be part of an SPE+. However, the result comes from their assumption A1 that implies $k_i\le 0$ for all $i$ and all $S$. The assumption that $K_S <0$ can only be obtained this way, eliminates all realizable splits in their class of games. While it may be a reasonable economic assumption, from a mathematical point of view it is crucial to be aware that it is in fact a strong imposition, which eliminates a rich set of equilibria where both firms have profit. In particular, it is an imposition on how firms imagine the demand response. The implicit function theorem provides unique local selections for stable splits, where it may happen $k_i>0$ for some $i\in S$, but as long as $K_S<0$ and the split is realizable, an equilibrium is possible. That is, equilibrium selections exist where it does not hold $k_i\le0$ for all $i$.
	\end{example}

	\subsection{(ii) NE solutions.}	
		The second question arising from Theorem \ref{t.main} is whether a realizable split will produce a SPE outcome in a game $\Gamma(v;g)$. The answer is: if it is stable and a solution to Equation \ref{e.s0}. Whether for a given game the system induced by Equation \ref{e.s0} has such a solution is a more standard existence question, but out of scope here. Our objective is to discuss how the structure of network effects among groups affect SPE, even if members of the same group are indistinguishable. In this section our goal is to show that structures producing realizable splits, also produce SPE for some classes of games.
		
		Given $\boldsymbol{\tau}\in \mathbb{R}^g$ and $\varepsilon >0$, let us define
			$$\underline{v}_i(\boldsymbol{\sigma})\equiv\underline{v}_i(\boldsymbol{\sigma}; [\boldsymbol{\tau}, \varepsilon])\equiv
		\left\lbrace 
		\begin{array}{lr}
			v_i(\boldsymbol{\sigma})-\tau_i +\varepsilon &  \text{if} ~~~ \sigma_i=1 \\
			v_i(\boldsymbol{\sigma})-\tau_i & \text{if} ~~~ i\in S(\boldsymbol{\sigma}) \\
			v_i(\boldsymbol{\sigma})-\tau_i -\varepsilon &  \text{if} ~~~ \sigma_i=0
		\end{array}
		\right.
		.
		$$
		A split is realizable in $\Gamma(v;g)$ if, and only if, it is realizable in $\Gamma(\underline{v};g)$. 
		\begin{remark}
			Let $\boldsymbol{\sigma}$ be a realizable split of a game $\Gamma(v;g)$ and $\mathbf{p^*}=\psi(\boldsymbol{\sigma})$. There is a unique $\boldsymbol{\tau}$  such that for every $\varepsilon>0$,  $(\mathbf{p^*},\boldsymbol{\sigma})$  is a SPE outcome of the game $\Gamma(\underline{v};g)$. Namely,
			$$\tau_i = v_i(\boldsymbol{\sigma})-\mathbf{m}\cdot(2\boldsymbol{\sigma}-\mathbf{1})/K_S(\boldsymbol{\sigma}).$$
		\end{remark}
		It is a remark because we are simply modifying the game by readjusting the system, so that its solution is in $[0,1]^g$ and interior (stable split). Network effects are the same independently  of $[\boldsymbol{\tau}, \varepsilon]$, which is the focus of this work. Note that, for example for $i\in S$, $\tau_i$ can be seen as coming from $\tau_i^a-\tau_i^b$ for each group, representing product differentiation, willingness to pay, or other constant added to the utility of consumers (similarly for $i\in \bar{S}$ if $\varepsilon$ is included). In fact, the remark defines a class of games $\Gamma(\underline{v}; \boldsymbol{\tau})$ for every $\varepsilon>0$. If we write $\varepsilon$ dependent on $i$, the same remark holds, and it would be natural to see it as a distribution of preferences that induces $\underline{v}$ where some groups are the indifferent consumers. The more standard approach, however, is to consider a distribution for each group and $v_i$ as the value of consumption for the indifferent consumer of that group. Here we are always assuming individuals of the same group are indistinguishable a priori. Let us just emphasize what the remark states: every realizable split is a SPE+ outcome for a class of games $\Gamma(\underline{v}; \boldsymbol{\tau})$ with network effects characterized by $v$.
		
		Which split is a SPE for the class $\boldsymbol{\tau}=\mathbf{0}$, if it exists, depends on $v$. In the class of multilinear games, discussed before, all total splits are realizable when $K_{\mathcal{G}}<0$. However, the solution to Eq.\ref{e.s0} is unique (it might be a real solution for the game or not), so only one realizable split will lead to an SPE. The fact that it exists and is unique comes from the multilinearity of the determinant operator (and because it is not zero for realizable splits). The system of $g$ equations induced by Eq.\ref{e.s0} can be rewritten as the following equation for each group $i$,
		\begin{equation}\label{eq:sys}
			\sum_j \left( w_{ij}-\frac{2}{K_{\mathcal{G}}}\right)m_j\sigma_j = \sum_j \left(\alpha_{ij}^b-\frac{1}{K_{\mathcal{G}}} \right)m_j.
		\end{equation}
		A gross (but educated) guess (and hence, the solution if it works) is
		\begin{equation}\label{eq:sol}
			\sigma_j=\frac{1}{2}\frac{\alpha_{ij}^b-\frac{1}{K_{\mathcal{G}}}}{\frac{w_{ij}}{2}-\frac{1}{K_{\mathcal{G}}}}.
		\end{equation}
		It generally doesn't work. However, recalling that $w_{ij}\equiv\alpha_{ij}^a+\alpha_{ij}^b$, it shows the importance of $\alpha_{ij}^a\neq\alpha_{ij}^b$ in producing price dispersion and asymmetric demand (when no other differentiation is present). We can write the following proposition.
		\begin{proposition}
			In a multilinear game, given a group $G_j$, if for all $G_i$ it holds $\alpha_{ij}^a=\alpha_{ij}^b$, then, in a total split SPE+ outcome, $\sigma_j=0.5$.
		\end{proposition}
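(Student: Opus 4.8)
The plan is to reduce the statement to a single coordinate of a linear system. If $(\mathbf p^*,\boldsymbol\sigma)$ is a total-split SPE+ outcome, then every group splits, so $S(\boldsymbol\sigma)=\mathcal G$ and, by Equation \ref{eq:sys} (and the remarks preceding it), $\boldsymbol\sigma$ is the \emph{unique} solution of the system $\sum_{k}(w_{ik}-2/K_{\mathcal G})m_k\sigma_k=\sum_k(\alpha^b_{ik}-1/K_{\mathcal G})m_k$ for all $i\in\mathcal G$, where in the multilinear case $K_{\mathcal G}$ is a constant, and nonzero because the split is realizable. It therefore suffices to show that this unique solution has $\sigma_j=1/2$.

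I would then use the hypothesis to isolate $\sigma_j$. Since $\alpha^a_{ij}=\alpha^b_{ij}$ for every $i$, we have $w_{ij}=2\alpha^b_{ij}$, so for each $i$ the coefficient of $\sigma_j$ in equation $i$, namely $(w_{ij}-2/K_{\mathcal G})m_j=2m_j(\alpha^b_{ij}-1/K_{\mathcal G})$, is exactly twice the $j$-th summand $m_j(\alpha^b_{ij}-1/K_{\mathcal G})$ on the right-hand side. Writing the system as $M\boldsymbol\sigma=\mathbf n$, so that $M=\big(W-(2/K_{\mathcal G})\mathbf 1\mathbf 1^{T}\big)\mathrm{diag}(\mathbf m)$ is a rank-one perturbation of $W\,\mathrm{diag}(\mathbf m)$, this identity says the $j$-th column of $M$ equals $2$ times the $j$-th summand of $\mathbf n$; hence $M(\boldsymbol\sigma-\tfrac12 e_j)=\mathbf n'$, where $\mathbf n'$ is $\mathbf n$ with its $j$-th summand deleted, and so $\sigma_j-\tfrac12=(M^{-1}\mathbf n')_j$. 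Equivalently --- this is the heuristic behind Equation \ref{eq:sol} --- fixing $\sigma_j=\tfrac12$ makes the $j$-th terms cancel on both sides of every equation, leaving a system in $\{\sigma_k\}_{k\neq j}$; the claim is that this reduced system stays consistent, so that $(\tfrac12,\dots)$ solves the full system and, by uniqueness, equals $\boldsymbol\sigma$.

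The claim thus boils down to the scalar identity $(M^{-1}\mathbf n')_j=0$, which I would verify by computing $M^{-1}$ via Sherman--Morrison: its denominator collapses because $\mathbf 1^{T}W^{-1}\mathbf 1=K_{\mathcal G}$, which follows from $m_i k_i=(W^{-1}\mathbf 1)_i$ together with $K_{\mathcal G}=\sum_i m_i k_i$, and the numerator is handled using $W e_j=2(\alpha^a_{ij})_i$ and the identity $W^{-1}A^b=\tfrac12\big(I+W^{-1}(A^b-A^a)\big)$, itself a consequence of $W=2A^a+(A^b-A^a)$. Substituting these and collecting terms should make the coordinate vanish.

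The step I expect to be the main obstacle is precisely this last cancellation --- showing $(M^{-1}\mathbf n')_j=0$, i.e. that the reduced system stays consistent. It is the only place the hypothesis $\alpha^a_{ij}=\alpha^b_{ij}$ can enter in an essential way: without it the $j$-th column of $M$ is not proportional to the $j$-th summand of $\mathbf n$, the guess \ref{eq:sol} for coordinate $j$ is $i$-dependent, and there is no reason for $\sigma_j$ to be $1/2$. The cancellation hinges delicately on the exact value of $K_{\mathcal G}$ and on the rank-one structure of the perturbation, and it is conceivable that completing it also requires the interiority ($\sigma_k\in(0,1)$) and realizability conditions of Definitions \ref{d:ss}--\ref{d:real} rather than pure linear algebra alone.
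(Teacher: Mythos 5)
Your reduction is sound, and in fact more explicit than the argument the paper gives: with $M\equiv\bigl(W-(2/K_{\mathcal G})\mathbf 1\mathbf 1^{T}\bigr)\mathrm{diag}(\mathbf m)$, the identity $\mathbf 1^{T}W^{-1}\mathbf 1=K_{\mathcal G}$ gives $\det M=-\det W\prod_i m_i\neq 0$, so the candidate total split is unique, and the hypothesis $w_{ij}=2\alpha^b_{ij}$ does make the $j$-th column of $M$ equal to twice the $j$-th summand of the right-hand side, so that $\sigma_j=\tfrac12+(M^{-1}\mathbf n')_j$. The genuine gap is exactly the step you flag: $(M^{-1}\mathbf n')_j=0$ does not follow from the column-$j$ hypothesis, and no Sherman--Morrison manipulation, interiority, or realizability argument will produce it, because the reduced ($g$ equations, $g-1$ unknowns) system is genuinely inconsistent in general. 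Concretely, take $g=2$, $m_1=m_2=1$, $\alpha^a_{11}=\alpha^b_{11}=0.5$, $\alpha^a_{21}=\alpha^b_{21}=1.5$ (so the hypothesis holds for $j=1$), and $\alpha^a_{12}=0.8$, $\alpha^b_{12}=1.2$, $\alpha^a_{22}=3$, $\alpha^b_{22}=2$. Then $W$ is the matrix $\underline W$ of Example \ref{ex:g2} with $K_{\mathcal G}=-1$, and $\boldsymbol\sigma=(0.3,0.4)$ is a realizable stable total split with $\boldsymbol\sigma\in Q(\psi(\boldsymbol\sigma))$: indeed $\psi(\boldsymbol\sigma)=(0.7,1.3)$ and $v_1(\boldsymbol\sigma)=v_2(\boldsymbol\sigma)=-0.6=p_a^*-p_b^*$ (verified directly against $v_i(\boldsymbol\sigma)=p_a^*-p_b^*$; note the right-hand side of (\ref{e.s0}) should be divided by $-K_S$, not $K_S$, to equal $p_a^*-p_b^*$). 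By Theorem \ref{t.main} this is a total-split SPE+ outcome, yet $\sigma_1=0.3$. So $(M^{-1}\mathbf n')_1\neq 0$ here, and the cancellation you are after cannot be established in the stated generality.

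For comparison, the paper's proof runs through the ansatz (\ref{eq:sol}): if (\ref{eq:sol}) is independent of $i$ in \emph{every} coordinate, then it is the unique solution of (\ref{eq:sys}), and firm-symmetry of column $j$ then forces its $j$-th coordinate to equal $\tfrac12$. That global $i$-independence is precisely the consistency of your reduced system, i.e.\ the cancellation you are missing; it is an extra hypothesis, implicitly in force in the setting the proposition is aimed at (Example \ref{ex:adj}, where $\alpha^a_{ik}=\alpha^b_{ik}$ for all $i,k$, so the unique candidate is $\boldsymbol\sigma=\mathbf{0.5}$). To complete your argument you must add that assumption (or strengthen the hypothesis to firm-symmetry of all columns), after which your uniqueness step finishes the proof in one line; with symmetry imposed on column $j$ only, the identity $(M^{-1}\mathbf n')_j=0$ you need is simply not available.
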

	\begin{proof}
		If Equation \ref{eq:sol} is independent of $i$ then it is the solution to the system generated by Equation \ref{eq:sys}. If $\alpha_{ij}^a=\alpha_{ij}^b$ then $\sigma_j=0.5$.
	\end{proof}
		The proposition can be rephrased in the following way: If a member of group $G$ has the same influence on all other players with whom it is connected (including those of its own group), then, in equilibrium, $G$ must split in half. The observation emphasizes the importance of (asymmetric) weights relative to the underlying network structure and how imposing strong symmetry properties leads to symmetric outcomes.
		\begin{example}	
			In the class of games based on an adjacency matrix, discussed in Example \ref{ex:adj}, $\alpha_{ij}^a=\alpha_{ij}^b \in\{0,1\}$. It is also assumed the network is undirected, i.e., we get $w_{ij}=2\alpha_{ij}=2\alpha_{ji}$. As such, the guess leads to the solution, $\boldsymbol{\sigma}=\mathbf{0.5}$ (which is independent of $\mathbf{m}$, $g$ and the network as long as $K_{\mathcal{G}}<0$). Let $M=\sum_i m_i$. Equilibrium prices are $p_a^*/M=p_b^*/M=1$.
			
			Naturally, if one wishes an outcome where prices and demand differ while keeping the network undirected, it is possible to choose $\boldsymbol{\tau}$ so that the modified game $\Gamma(\underline{v};\boldsymbol{\tau})$ has the desired equilibrium.
		\end{example}

	\section{Conclusion}\label{s:conc}
	How demand responds to prices determines the outcome of price competition. However, demand is an aggregate of consumption decisions. When different groups interact and influence each other, their response need not be homogeneous and mimic the aggregate behavior.  It is thus not necessary that consumption jumps from corner solution to corner solution when network effects induce bandwagon responses to price changes. Nevertheless, a plethora of possibilities for the composition of the aggregate need not bring uncertainty. Network effects produce an interlocking of decisions that stabilizes price competition. These are the main points of this work.
	
	We have considered the partition into groups given a priori, but it can be endogeneized. One way is as a classification tool, based on the characteristics of consumers that produce the same set of equilibria (see, for example, \citet{Soeiro:2015:EEI}). Another way is to consider the partition an assessment by the firms, which could have an associated cost investment decision. It would be a possible extension of the work: what happens when firms use different partitions to decide prices?
	
	If it is a choice of information structure, it seems natural that the appropriate partition into groups lies somewhere between no information (just aggregate behavior) or complete information (network effects at the individual level). Naturally, the latter may be more accurate, but either unfeasible or with a high cost for firms. If a firm invests in studying individual-level network effects, it is natural it wants to price discriminate to build value for that information. So the results obtained seem appropriate for uniform pricing: at least two groups when the asymmetry of weights and direction is allowed, or at least five groups for undirected and unweighted structures.
	
	Two simplifying assumptions may concern the reader: mandatory consumption and no costs for firms. The essential mathematical tool used is the implicit function theorem. The result is local and, the function behind it is the difference between buying from one or the other firm. With a third option, this can be extended, for each group, to three sets by including the following two splits: difference of buying or not from each firm. Removing mandatory consumption may change some outcomes, but it will fit the implicit function theorem without a particular problem. Introducing a cost function for firms is even less problematic, as we have based results on second-stage interactions. A fixed cost may eliminate some equilibria (if profits don't cover them), and a linear cost function provokes a translation in prices, but results follow straightforwardly. A more general cost function will have the usual difficulties and need their counterpart assumptions.
\appendix
	\section{Proof of Lemma \ref{l:uq}}

		\textit{For an outcome $(\mathbf{p}^*, \boldsymbol{\sigma})$ where $\boldsymbol{\sigma}$ is a stable split in $Q(\mathbf{p^*})$, if $J_v(\boldsymbol{\sigma};S)\neq 0$, there is a unique continuous local selection $q$ such that $q(\mathbf{p^*})=\boldsymbol{\sigma}$.
		Furthermore, $q$ is $C^2$, $D'_j(p_j^*; q)=K_S(\boldsymbol{\sigma})$ and $D''_j(p_j^*; q)=R_S(\boldsymbol{\sigma})$}.
	\begin{proof}
		Let $(\mathbf{p}^*, \boldsymbol{\sigma}^*)$ be an outcome where $\boldsymbol{\sigma}^*$ is a stable split in $Q(\mathbf{p^*})$. Recall that it holds
		\begin{enumerate}[(i)]
			\item if $\sigma_i=1$ then $v_i(\boldsymbol{\sigma}^*)>p_a^*-p_b^*$;
			\item if $\sigma_i=0$ then $v_i(\boldsymbol{\sigma}^*)<p_a^*-p_b^*$;
			\item if $0<\sigma_i<1$ then $v_i(\boldsymbol{\sigma}^*)=p_a^*-p_b^*$.
		\end{enumerate}
		Hence, for groups that satisfy $(i)$ or $(ii)$, there is some neighborhood $N(\mathbf{p}^*, \boldsymbol{\sigma}^*)$ where the best response of their members is constant. Recall that $l\equiv\#S$ denotes the number of groups that split, hence, the number of groups that satisfy $(iii)$. Note that $l\ge1$ (stable split, see Definition \ref{d:ss}). Index the groups that split by $S(\boldsymbol{\sigma}^*)\equiv \{1,\ldots, l\}$. Denote by $\mathbf{s}^*\equiv(s_1^*,\ldots,s_l^*)$ their respective coordinates in $\boldsymbol{\sigma}^*$ and let $\boldsymbol{\sigma}^*_{\bar{S}}$ denote the profile of the remaining groups. Consider now the function induced by the utility of each consumer in any group in $S$ and given by
		$$\Delta u_i(\mathbf{p},\mathbf{s})\equiv\Delta u_i(\mathbf{p},\mathbf{s};\boldsymbol{\sigma}^*_{\bar{S}})\equiv v_i(\boldsymbol{\sigma})-\Delta p$$
		where $\Delta p \equiv p_a-p_b$, and define
		\begin{eqnarray*}
		\Delta U :(\mathbb{R}_0^+)^2\times (0,1)^{l} &\to &\mathbb{R}^{l}\\
		(\mathbf{p},\mathbf{s}) & \mapsto & \Delta u_1(\mathbf{p},\mathbf{s}), \ldots, \Delta u_l(\mathbf{p},\mathbf{s}).
		\end{eqnarray*}
		Observe that $\Delta U(\mathbf{p}^*,\mathbf{s}^*)=\mathbf{0}$. By $(iii)$ and because the best response for members of groups in $\bar{S}$ is locally constant, any outcome $(\mathbf{p},\mathbf{s}, \boldsymbol{\sigma}^*_{\bar{S}})$ in the neighborhood $N(\mathbf{p}^*, \boldsymbol{\sigma}^*)$ such that $\Delta U(\mathbf{p},\mathbf{s})=\mathbf{0}$, is a Nash equilibrium for consumers given $\mathbf{p}$.
		Note now that $\Delta U$ is $C^2$ and $J_{\Delta U, \mathbf{s}}(\mathbf{p}^*,\mathbf{s}^*; \boldsymbol{\sigma}^*_{\bar{S}})=J_{v}(\boldsymbol{\sigma}^*; S)$. Therefore, if $\det\left[ J_{v}(\boldsymbol{\sigma}^*; S)\right]\neq 0$, using the Implicit Function Theorem, there is a product neighborhood $N_p\times N_{s} \subset (\mathbb{R}_0^+)^2\times (0,1)^{l}$, containing $(\mathbf{p}^*,\mathbf{s}^*)$, and a unique $C^2$ function $\phi: N_p \to N_{s} $ such that $\Delta U(\mathbf{p}, \phi(\mathbf{p}))=\mathbf{0}$. Furthermore, in that neighborhood
		$$\sum_{i=1}^{l}\dfrac{\partial \Delta U}{\partial \phi_i(\mathbf{p})}(\mathbf{p}, \phi(\mathbf{p}))\dfrac{\partial \phi_i}{\partial p_a}(\mathbf{p})=\dfrac{\partial \Delta U}{\partial p_a}(\mathbf{p},\phi(\mathbf{p})).$$
		As in the same neighborhood $\Delta u_i(\mathbf{p}, \boldsymbol{\sigma})=0$, we get $v_i(\boldsymbol{\sigma})=\Delta p$, and
		$$ J_{v}(\phi(\mathbf{p}); S)\dfrac{\partial \phi}{\partial p_a}(\mathbf{p})=\mathbf{1}_{l \times 1}.$$
		Using the Cramer rule,
		$$\dfrac{\partial \phi_i}{\partial p_a}(\mathbf{p})=\dfrac{\det\left[ J^{(i)}_{v}(\phi(\mathbf{p}); S)\right]}{\det\left[ J_{v}(\phi(\mathbf{p}); S)\right]},$$
		where $J^{(i)}_{\Delta v}(\phi(\mathbf{p}); S)$ is obtained by replacing the entries of column $i$ with $1$ in $J_{v}(\boldsymbol{\sigma}^*; S)$.
	
		Therefore, for the outcome $(\mathbf{p}^*, \boldsymbol{\sigma})$, given a neighborhood $N$ and a continuous local selection $q$, demand is
		$$D_a(p_a; p_b, q)=\sum_{i\in\mathcal{G}}m_i q_i(\mathbf{p})=\sum_{i\in S} m_i \phi_i(\mathbf{p}) + \underline{D_a}(\boldsymbol{\sigma}^*_{\bar{S}}),$$
		where $\underline{D_a}(\boldsymbol{\sigma}^*_{\bar{S}})$ is locally a constant: the number of groups that choose $a$ (i.e., satisfy $(i)$). As such in that neighborhood (because groups that don't split do not change)
		\begin{equation}\label{e:p:dd}
			D'(p_a; p_b, q)=\dfrac{\partial D_a}{\partial p_a}(\phi(\mathbf{p}))=\sum_{i}m_i\dfrac{ \det\left[ J^{(i)}_{v}(\phi(\mathbf{p}); S)\right]}{\det\left[ J_{v}(\phi(\mathbf{p}); S)\right]}.
		\end{equation}
		Observe that $\det\left[ J^{(i)}_{v}(\phi(\mathbf{p}); S)\right]=\sum_{i'} C_{i'i}(J_{v}(\phi(\mathbf{p}); S))$ and the right-hand side of (\ref{e:p:dd}) is $K_S$.
	
		Analogously, for the second derivative,
		$$D''_a=m_i \frac{\partial^2 q_i}{\partial p_a^2}, ~~ i\in S,$$
		which is implicitly given by $v_i(q(p_a,p_b^*))=\Delta p$. So for all $i\in S$,
		$$\sum_{i'\in S}\frac{\partial v_i}{\partial q_{i'}}\frac{\partial q_{i'}}{\partial p_a}=\mathbf{1}.$$
		Therefore, for all $i$
		$$\sum_{i'\in S}\left[ \frac{\partial}{\partial p_a}\left( \frac{\partial v_i}{\partial q_{i'}}\right) \frac{\partial q_{i'}}{\partial p_a}
		+\frac{\partial v_i}{\partial q_{i'}} \frac{\partial^2 q_{i'}}{\partial p_a^2}\right] =\mathbf{0}.$$
		As
		$$\frac{\partial}{\partial p_a}\left( \frac{\partial v_i}{\partial q_{i'}}\right) =\sum_{j\in S}\frac{\partial^2 v_i}{\partial q_{i'}\partial q_{j}}\frac{\partial q_{j}}{\partial p_a},$$
		we get
		$$\sum_{i'\in S}\left[ \left( \sum_{j\in S}\frac{\partial^2 v_i}{\partial q_{i'}\partial q_{j}}\frac{\partial q_{j}}{\partial p_a}\right)  \frac{\partial q_{i'}}{\partial p_a}
		+\frac{\partial v_i}{\partial q_{i'}} \frac{\partial^2 q_{i'}}{\partial p_a^2}\right] =\mathbf{0},$$
		and as such
		$$\sum_{i'\in S}\frac{\partial v_i}{\partial q_{i'}} \frac{\partial^2 q_{i'}}{\partial p_a^2}
		=
		-\sum_{i'\in S}\left[ \left( \sum_{j\in S}\frac{\partial^2 v_i}{\partial q_{i'}\partial q_{j}}\frac{\partial q_{j}}{\partial p_a}\right)  \frac{\partial 	q_{i'}}{\partial p_a}\right].$$
		The system on the left can be written as
		$$J_v(\boldsymbol{\sigma}; S) \nabla^2 q,$$
		and the system on the right can be written as
		$$\mathbf{k} H_{v_i}\mathbf{k}^{T},$$
		which, using the Cramer rule as before, leads to $R_S$ (instead of replacing a column by 1's we use the above).
	\end{proof}
	\bibliography{2021_rsoeiro-apinto_gne-bib}
\end{document}